\newcommand{\hajos}{\hyperlink{hajos}{Hajos graph}}
\begin{document}

\title{A note on coloring (even-hole,cap)-free graphs}
\author{
Shenwei Huang\\ School of Computing Science\\
Simon Fraser University, Burnaby B.C., V5A 1S6, Canada\\ 
\texttt{shenweih@sfu.ca}\\ 
\\
Murilo V. G. da Silva\thanks{Partially supported by CNPq.}\\ Universidade Tecn\'ologica Federal do Parana, Curitiba, Brazil\\
\texttt{murilo@utfpr.edu.br}
}
\date{}
\maketitle

\begin{abstract}
A {\em hole} is a  chordless cycle of length at least four.
A hole is {\em even}  (resp. {\em odd}) if it contains an even (resp. odd) number of vertices.
A \emph{cap} is a graph  induced by a hole with an additional vertex that is adjacent to exactly
two adjacent vertices on the hole.
In this note, we use a decomposition theorem by Conforti et al. (1999)
to show that if a graph $G$ does not contain any even hole or cap as an induced subgraph,
then $\chi(G)\le \lfloor\frac{3}{2}\omega(G)\rfloor$, where
$\chi(G)$ and $\omega(G)$ are the chromatic number and the clique number of $G$, respectively.
This bound is attained by odd holes and the Hajos graph.
The proof leads to a polynomial-time $3/2$-approximation algorithm for coloring (even-hole,cap)-free graphs.
\end{abstract}

\section{Introduction}

All graphs in this paper are finite, simple and undirected.
We say that a graph $G$ {\em contains} a graph $F$, if $F$ is
isomorphic to an induced subgraph of $G$.
A graph $G$ is
{\em $F$-free} if it does not contain $F$.
Let ${\cal F}$ be a (possibly infinite) family of graphs.
A graph $G$ is
{\em ${\cal F}$-free} if it is $F$-free, for every $F \in {\cal F}$.

A {\em hole} is a
chordless cycle of length at least four.
A hole is {\em even}
(resp. {\em odd}) if it contains an even (resp. odd) number of nodes.
A hole of length $n$ is also called an {\em $n$-hole}.
We denote by a $n$-hole by $C_n$.
Let $G=(V(G),E(G))$ be a graph.
The two vertices $u,v\in V(G)$ are \emph{adjacent} (or \emph{neighbors}), respectively,
\emph{non-adjacent} (or \emph{non-neighbors}) if $\{u,v\}\in E(G)$, respectively, $\{u,v\}\notin E(G)$.
The \emph{open neighborhood} of a vertex $v$, denoted by $N_G(v)$, is the set of neighbors of $v$.
The \emph{closed neighborhood} of $v$ is $N_G[v]=N(v)\cup \{v\}$.
For a set $X\subseteq V(G)$, let $N_G(X)=\bigcup_{v\in X}N_G(v)\setminus X$
and $N_G[X]=N_G(X)\cup X$.
The \emph{degree} of $v$, denoted by $d_G(v)$, is equal to $|N_G(v)|$.
We shall omit the subscript $G$ if the context is clear.

A \emph{complete graph} is a graph so that every pair of vertices are adjacent.
We denote $K_n$ by the complete graph with $n$ vertices.
The graph $K_3$ is also called a \emph{triangle}.
A \emph{clique} is a vertex subset that induces a complete graph.
The size of a largest clique in $G$, denoted by $\omega(G)$, is the \emph{clique number} of $G$.
A vertex subset $S$ of $V$ is a  \emph{cutset} if $G-S$ has more connected components
than $G$. If a cutset $S$ is also a clique, it is called a \emph{clique cutset}.
A graph with no clique cutsets is called an \emph{atom}.
We say that a vertex 
\emph{universal}
if it is adjacent to all other vertices in $G$. 
For two subsets $X,Y\subseteq V(G)$, we say that $X$ is \emph{complete} (respectively \emph{anti-complete}) to  $Y$
if every vertex in $X$ is adjacent (respectively non-adjacent) to every vertex in $ Y$.
If $X$ consists of only a single element $x$, we simply say $x$, rather than $\{x\}$, is complete (anti-complete) to $Y$.


A (\emph{proper}) \emph{$k$-coloring} of a graph $G=(V,E)$ is a mapping $\phi:V\rightarrow \{1,2,\ldots,k\}$
such that $\phi(u)\neq \phi(v)$ whenever $uv\in E$. The value $\phi(u)$ is usually referred to
as the {\em color} of $u$ under $\phi$.
We say that $G$ is {\em $k$-colorable} if $G$ admits a $k$-coloring.
The \emph{chromatic number} of $G$, denoted by $\chi(G)$,
is the smallest positive integer $k$ such that $G$ is $k$-colorable.
We use \textsc{Chromatic Number}
and \textsc{$k$-Colorability} to denote the problem
of finding the chromatic number and deciding if a given
graph is $k$-colorable, respectively.
Throughout the paper, we use $n$ and $m$ to denote the number
of vertices and edges in $G$, respectively.

\subsection{Even-hole-free graphs}
A graph is \emph{even-hole-free} if it is $\{C_4,C_6,\ldots\}$-free.
Even-hole-free graphs generalize \emph{chordal} graphs, i.e.,
those graphs that are hole-free. 
The structure of even-hole-free graphs was first studied by
Conforti, Cornu\'ejols, Kapoor and Vu\v{s}kovi\'c in \bfcite{cckv-ehf1,cckv-ehf2}.
They focused on showing that even-hole-free
graphs can be recognized in polynomial time
(a problem that at that time
was not even known to be in NP), and their primary motivation
was to develop techniques that can then be used in the study of perfect graphs.
To state their result, we first give some definitions.

A node set $S \subseteq V(G)$ is a {\em $k$-star cutset} of $G$ if $S$
is a cutset and $S$ contains a clique of size $k$ so that every vertex in $S\setminus C$ has a neighbor in $C$.
A $1$-star, $2$-star and $3$-star are referred to as
\emph{star}, \emph{double star} and \emph{triple star},
respectively. Moreover, $S$ is said to be a \emph{full $k$-star} if $S=N[C]$.

A graph $G$ has a {\em $2$-join} $V_1|V_2$, with special sets $(A_1,A_2,B_1,B_2)$,
if the nodes of $G$ can be partitioned into sets $V_1$ and $V_2$ so that the following hold.
\begin{enumerate}[label=\itshape (J\arabic*)]
\item For each $1\le i\le 2$,  $A_i$ and $B_i$ are non-empty and disjoint with $A_i \cup B_i \subseteq V_i$.
\item $A_1$ and $B_1$ are complete to $A_2$ and $B_2$, respectively,  and  these are the only edges between $V_1$ and $V_2$.
\item For each $1\le i\le 2$, the graph $G[V_i]$ induced by $V_i$ contains a path with one end in $A_i$ and the other in $B_i$
but $G[V_i]$ is not a chordless path.
\end{enumerate}

In $2002$, Conforti, Cornu\'ejols, Kapoor and Vu\v{s}kovi\'c \bfcite{cckv-ehf1,cckv-ehf2}
obtained the first decomposition theorem for even-hole-free graphs that uses $2$-joins and star, double star and triple star cutsets.
The decomposition was then led to the first polynomial time recognition algorithm for even-hole-free graphs.
Since the main motivation was to show the existence of such an algorithm, they did not intend to optimize
the running time which is $O(n^{40})$.
Soon after, Chudnovsky, Kawarabayashi and Seymour \bfcite{cks}
developed a $O(n^{31})$ recognition algorithm. Their algorithm is not based
on decomposition theorems but on directly testing for even holes  after a certain step
called cleaning is performed.
Later on,  Silva and Vu\v{s}kovi\'c obtained a new decomposition theorem which avoids double star and triple star cutsets.
\begin{thm}\bfcite{SV13}\label{thm:decom ehf}
Every connected even-hole-free graph is either basic or admits a star cutset or a $2$-join.
\end{thm}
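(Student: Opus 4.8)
Since Theorem~\ref{thm:decom ehf} is the decomposition theorem of Silva and Vu\v{s}kovi\'c \bfcite{SV13}, I only describe the line of attack. The natural starting point is the earlier decomposition theorem of Conforti, Cornu\'ejols, Kapoor and Vu\v{s}kovi\'c, which states that every connected even-hole-free graph is either basic, or has a $2$-join, or has a star, double star or triple star cutset. Everything in that list except the double and triple star cutsets already appears in Theorem~\ref{thm:decom ehf}, so the entire content of the improvement is the following: in an even-hole-free graph, whenever a double or triple star cutset exists but no $2$-join does, there is in fact an ordinary star cutset.

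Accordingly I would take a connected even-hole-free graph $G$ that is not basic and has neither a $2$-join nor a star cutset, and derive a contradiction. By the CCKV theorem $G$ has a $k$-star cutset $S=C\cup N$ with $k\in\{2,3\}$, a clique $C$ with $|C|=k$, and $N\subseteq N(C)$; choose such an $S$ with $k$ minimum and then with $|N|$ minimum. Let $D_1,\dots,D_t$, $t\ge 2$, be the components of $G-S$. A first batch of observations is local and uses only the absence of a star cutset together with the minimality of $S$: no vertex of $S$ is adjacent to every other vertex of $S$, and every vertex of $N$ has a neighbor in at least two of the $D_i$, so the way the components attach to $C\cup N$ is already quite rigid.

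The heart of the matter is to turn even-hole-freeness against this configuration. Using that $G$ has no $C_4,C_6,\dots$, one controls how an induced path or hole running through two distinct components $D_i,D_j$ meets the clique $C$: the forbidden even holes rule out the wheel-like attachments that a ``spread-out'' double or triple star cutset would otherwise create, and force the vertices of each $D_i$ that miss $C$ to hang off $C$ through very short chordless connections. From this one concludes that either the partition of $V(G)$ obtained by splitting $D_1,\dots,D_t$ into two groups, each augmented by the part of $S$ it sees, satisfies (J1)--(J3) and is therefore a $2$-join of $G$, or else $G$ has, after all, a star cutset --- and either way we contradict the choice of $G$.

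The main obstacle is exactly this last reduction: pinning down the attachments of the centre clique $C$ and of the satellite set $N$ precisely enough to \emph{build} a genuine $2$-join, and most delicately to verify (J3) --- that each of the two sides contains a non-trivial induced path (not a chordless path) between its two special sets. This is where one needs the extended case analysis on small wheels and short connections that occupies the bulk of \bfcite{SV13}, and it is the one place in the argument where even-hole-freeness, rather than mere hole-freeness, is genuinely used.
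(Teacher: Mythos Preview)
This theorem is not proved in the present paper at all: it is quoted verbatim from Silva and Vu\v{s}kovi\'c \bfcite{SV13} as background, with no argument supplied, so there is no ``paper's own proof'' to compare your sketch against. Your outline of how one might eliminate double and triple star cutsets from the CCKV decomposition is a reasonable summary of the strategy in \bfcite{SV13}, but it remains a sketch rather than a proof --- the actual argument in that paper is long and case-heavy, and you correctly identify the (J3) verification and the attachment analysis as the places where the real work lies. For the purposes of this paper, nothing more than the citation is required.
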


Here the description of `basic' graphs are somewhat technical and we refer to
\bfcite{SV13} for formal definitions.
Taking advantage of this strengthened decomposition,
Silva and Vu\v{s}kovi\'c were able to obtain an $O(n^{19})$ algorithm to recognize even-hole-free graphs
which is a significant improvement over the ones from \bfcite{cks,cckv-ehf2}.
Very recently, Chang and Lu \bfcite{CL15} showed that the $O(n^{19})$ algorithm does not
take full advantage of \autoref{thm:decom ehf}. With more advanced techniques, they developed
the best known recognition algorithm so far.
\begin{thm}\bfcite{CL15}\label{thm:recog ehf}
For a graph $G$ with $n$ nodes and $m$ edges, there exists an algorithm
that runs in $O(m^3n^5)$ to recognize if $G$ is even-hole-free.
Moreover, the algorithm outputs an even hole if it exists.
\end{thm}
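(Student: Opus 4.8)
The statement is due to Chang and Lu~\bfcite{CL15}, so what follows is only a sketch of the route one would take to prove it; the difficulty lies entirely in the running time, not in correctness. The plan is to use the \emph{cleaning} paradigm introduced by Chudnovsky, Kawarabayashi and Seymour~\bfcite{cks}, refined so as to exploit the decomposition of \autoref{thm:decom ehf}.

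First I would fix a hypothetical shortest even hole $H$ of $G$ and sort the vertices of $V(G)\setminus V(H)$ according to how they attach to $H$. A vertex with ``few'' neighbours on $H$ attaches in one of a bounded number of configurations, none of which obstructs a shortest-path search for an even hole; the dangerous vertices are the \emph{major} ones, those with many neighbours on $H$. The crucial combinatorial fact, essentially from \bfcite{cks}, is that for a \emph{shortest} even hole the major vertices are confined to a polynomially long list of candidate ``cleaning sets'' $X_1,\dots,X_t$, each of small size, such that for some index $i$ the hole $H$ still lies in $G-X_i$ and is \emph{clean} there, i.e.\ has no major vertex with respect to it. For a clean shortest even hole one can then guess a constant number of its vertices, join the guesses by shortest paths in the graph from which the remaining neighbours of the guessed vertices have been deleted, and recover $H$; checking that the reconstructed cycle is chordless and of even length is immediate, and this is also what yields the ``moreover'' clause.

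Running this detection over all cleaning candidates, and — for the instances that resist direct cleaning — inside the basic pieces and along the star cutsets and $2$-joins supplied by \autoref{thm:decom ehf}, already gives a polynomial algorithm; the $O(n^{19})$ bound of \bfcite{SV13} and then the $O(m^3n^5)$ bound of \bfcite{CL15} come from progressively tighter accounting of how many (cleaning set, guess) pairs actually arise and how much work is repeated between overlapping candidates. I expect this amortized analysis, rather than any single new structural lemma, to be the main obstacle: one must both shorten the candidate list and make each shortest-path-plus-chord-check step cheap without recomputing it from scratch for every candidate.
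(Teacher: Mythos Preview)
The paper does not prove this theorem at all: it is quoted from Chang and Lu \bfcite{CL15} as background and is stated without any accompanying argument, so there is nothing in the paper to compare your sketch against. Your proposal correctly recognises this (you say at the outset that the result is due to \bfcite{CL15}), and the cleaning-plus-decomposition outline you give is a reasonable high-level description of how such recognition algorithms work; but for the purposes of this paper no proof is needed or expected, and the appropriate ``proof'' here is simply a citation.
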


\subsection{$\chi$-boundedness and $\beta$-perfectness}

Note that by excluding a $4$-hole, one also excludes all antiholes (An antihole
is the complement of a hole) of length at least $6$. If we switch parity, the closer analogous class to even-hole-free graphs
is the class of perfect graphs rather than just the odd-hole-free graphs.
It was shown \bfcite{GLS84} that \textsc{Chromatic Number}
can be solved in polynomial time for perfect graphs.  In contrast, it remains open whether one can optimally color
an even-hole-free graph
(this is also the case for \textsc{$k$-Colorability}).
\begin{prob}
What is the complexity of \textsc{Chromatic Number} for even-hole-free graphs?
\end{prob}
Despite the unknown status of the complexity of determining $\chi(G)$
for even-hole-free graphs, an approximate version does exist.
In $2008$, Addario-Berry, Chudnovsky, Havet, Reed and Seymour \bfcite{achrs} settled a conjecture of Reed
by proving that every even-hole-free graph contains a {\em bisimplicial vertex} (a vertex whose set of neighbors
induces a graph that is a union of two cliques).  Since the degree of a bisimplicial vertex is at most $2\omega (G) -2$,
this has the following immediately consequence.
\begin{thm}\bfcite{achrs}\label{thm:bisimplicial}
If $G$ is an even-hole-free graph, then $\chi (G) \leq 2 \omega (G) -1$.
\end{thm}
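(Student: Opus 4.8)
The plan is to deduce the bound by a short induction on the number of vertices, using as a black box the theorem of Addario-Berry, Chudnovsky, Havet, Reed and Seymour \bfcite{achrs} that every even-hole-free graph contains a bisimplicial vertex. The two structural facts that make the induction run are that the class of even-hole-free graphs is closed under deleting vertices (deleting a vertex cannot create an induced even hole), and that $\omega(G-v)\le\omega(G)$ for every vertex $v$; I would record these first.

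For the inductive step, let $G$ be a nonempty even-hole-free graph and assume the inequality for all even-hole-free graphs with fewer vertices. Pick a bisimplicial vertex $v$ and write $N(v)=Q_1\cup Q_2$ with each $Q_i$ a clique. Since $v$ is complete to $N(v)$, each $Q_i\cup\{v\}$ is a clique, so $|Q_i|\le\omega(G)-1$ and hence $d(v)\le|Q_1|+|Q_2|\le 2\omega(G)-2$. Now $G-v$ is even-hole-free and satisfies $\omega(G-v)\le\omega(G)$, so by the induction hypothesis it has a proper coloring with at most $2\omega(G-v)-1\le 2\omega(G)-1$ colors. The at most $2\omega(G)-2$ neighbors of $v$ use at most $2\omega(G)-2$ of these colors, so some color in $\{1,\dots,2\omega(G)-1\}$ is available for $v$; assigning it to $v$ yields a $(2\omega(G)-1)$-coloring of $G$.

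I do not expect any genuine obstacle in this argument: the entire difficulty is concentrated in the existence of a bisimplicial vertex, which is exactly the quoted theorem and which I would not attempt to reprove here. The only point needing a word of care is the degenerate case $\omega(G)\le 1$: then $G$ is edgeless, $\chi(G)\le 1=2\omega(G)-1$ (taking $G$ nonempty), so the induction bottoms out correctly at a single vertex. It is also worth noting that this same deletion argument is exactly what upgrades the bisimplicial-vertex theorem into a polynomial-time coloring procedure, since a bisimplicial vertex can be found greedily.
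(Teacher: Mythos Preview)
Your proposal is correct and follows exactly the approach the paper indicates: the paper states the theorem as an ``immediate consequence'' of the bisimplicial-vertex theorem of \bfcite{achrs} together with the observation that a bisimplicial vertex has degree at most $2\omega(G)-2$, and your inductive deletion argument is precisely the standard way to spell out that consequence.
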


Gy\'arf\'as \bfcite{Gy87} introduced the concept of $\chi$-bounded graphs as
a natural extension of perfect graphs. A class $\mathcal{G}$ is called
\emph{$\chi$-bounded} with \emph{$\chi$-binding function $f$} if for every
induced subgraph $G'$ of $G$ it holds that $\chi(G')\le f(\omega(G'))$.
The class of perfect graphs is a $\chi$-bounded family with identity function $f(x)=x$
being its $\chi$-binding function. Translating \autoref{thm:bisimplicial} into this language,
it says that  the class of even-hole-free graphs  belongs to the family of $\chi$-bounded graphs
with $\chi$-binding function $f(x)=2x-1$.  On the other hand, it is well-known that
finding a maximum clique in $C_4$-free graphs (hence even-hole-free graphs) can be achieved in polynomial time.
It was first observed by Farber \bfcite{Fa89} that $4$-hole-free graphs have
$O(n^2)$ maximal cliques and all of them can be listed in polynomial time.
For even-hole-free graphs, \autoref{thm:bisimplicial} implies that the neighborhood
is chordal. 
The existence of a vertex whose neighborhood induces a chordal graph
in even-hole-free graphs was first proved by Silva and Vu\v{s}kovi\'c \bfcite{SV07}.
Since it takes linear-time to find the clique number in a chordal graph, see for example \bfcite{Golu04},
this fact implies that $\omega(G)$ can be computed in $O(mn)$
for even-hole-graph $G$ with $n$ vertices and $m$ edges.
This and \autoref{thm:bisimplicial} imply:
\begin{thm}\label{thm:2-appro}
There exists an $O(mn)$ $2$-approximation algorithm
for computing the chromatic number of even-hole-free graphs.
Moreover, the algorithm outputs a (proper) coloring of $G$
that uses at most $2\omega(G)-1$ colors.
\end{thm}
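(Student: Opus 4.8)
This statement is essentially a corollary of \autoref{thm:bisimplicial} together with the fact, recalled just before it, that $\omega(G)$ can be computed in $O(mn)$ time when $G$ is even-hole-free. The plan is to show that $G$ is $(2\omega(G)-2)$-degenerate and then color it greedily along a degeneracy ordering. For the degeneracy bound I would use that the class of even-hole-free graphs is hereditary, so \autoref{thm:bisimplicial} applies not only to $G$ but to every induced subgraph $H$ of $G$: such an $H$ has a bisimplicial vertex, whose degree in $H$ is at most $2\omega(H)-2\le 2\omega(G)-2$, as observed just before \autoref{thm:bisimplicial}. Hence every induced subgraph of $G$ contains a vertex of degree at most $2\omega(G)-2$.

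From this, repeatedly deleting a minimum-degree vertex produces an ordering $v_1,\dots,v_n$ of $V(G)$ in which each $v_i$ has at most $2\omega(G)-2$ neighbors among $v_{i+1},\dots,v_n$; maintaining the degrees in a bucket queue, this costs $O(n+m)$ time. Now color $v_n,v_{n-1},\dots,v_1$ in this order, assigning to each vertex the smallest color not already used on one of its already-colored neighbors. When $v_i$ is processed it has at most $2\omega(G)-2$ colored neighbors, so some color in $\{1,\dots,2\omega(G)-1\}$ is free; thus the resulting proper coloring uses at most $2\omega(G)-1$ colors, and the pass again runs in $O(n+m)$ time.

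Finally, compute $\omega(G)$ in $O(mn)$ time as described before the statement, and output it together with the coloring above. Since $\omega(G)\le\chi(G)\le 2\omega(G)-1$, the coloring uses fewer than $2\chi(G)$ colors while the returned lower bound satisfies $\omega(G)>\frac{1}{2}\chi(G)$; hence the procedure is a $2$-approximation for \textsc{Chromatic Number}, with total running time $O(mn)$ dominated by the computation of $\omega(G)$. I do not expect a genuine obstacle in this argument: the only point that needs care is to invoke \autoref{thm:bisimplicial} on induced subgraphs of $G$ --- legitimate because the class is hereditary --- so that the \emph{minimum} degree, rather than the degree of one fixed vertex, is the quantity that is bounded, since this is exactly what makes the greedy coloring succeed; the $O(mn)$ running time is then inherited directly from the cited bound for computing $\omega(G)$.
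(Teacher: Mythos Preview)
Your proposal is correct and is precisely the argument the paper has in mind: the theorem is stated there without proof, as an immediate consequence of \autoref{thm:bisimplicial} (applied hereditarily to get degeneracy $\le 2\omega(G)-2$ and hence a greedy $(2\omega(G)-1)$-coloring) together with the $O(mn)$ computation of $\omega(G)$ discussed just before it. The only thing you add beyond the paper is the routine running-time bookkeeping for the degeneracy ordering and greedy pass, which is fine.
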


Another motivation for the study of even-hole-free graphs is their
connection to $\beta$-perfect graphs introduced by Markossian,
Gasparian and Reed \bfcite{mgr}.
For a graph $G$, consider the following linear ordering on $V(G)$:
order the vertices by repeatedly removing a vertex of minimum
degree in the subgraph of vertices not yet chosen and placing it after
all the remaining vertices but before all the vertices already removed.
Coloring greedily on this order gives the upper bound
$\chi (G) \leq \beta (G)$, where
\[\beta(G)=\max\{\delta(G')+1:G' \text{ is an induced subgraph of } G\}.\]
A graph is \emph{$\beta$-perfect} if for each induced subgraph $H$ of $G$, $\chi(H)=\beta(H)$.
Clearly, $\beta(C_{2s})=3$ and $\chi(C_{2s})=2$ for any $s\ge 2$.
This means that any $\beta$-perfect graph must be even-hole-free.
The converse of the statement is not necessarily true (replacing each vertex of a $5$-hole
by a clique of size two gives a counter-example).
Nevertheless, if we forbid an additional graph in addition to even holes, it is possible
to obtain $\beta$-perfect graphs. A recent result of Kloks, M\"uller and Vu\v{s}kovi\'c
\bfcite{kmv} showed that if the additional forbidden graph is the diamond, then
this is indeed the case. A \emph{diamond} is the graph obtained from $K_4$ by  removing an edge.
\begin{thm}\bfcite{kmv}\label{thm:diamond}
Every (even-hole,diamond)-free graph is $\beta$-perfect.
\end{thm}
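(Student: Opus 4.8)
The plan is to reduce $\beta$-perfectness to a bound on the chromatic number and then to feed in a decomposition theorem. Since the class of (even-hole,diamond)-free graphs is hereditary and $\chi(H)\le\beta(H)$ for every graph $H$, it suffices to prove that \emph{every (even-hole,diamond)-free graph $G$ satisfies $\chi(G)\ge\delta(G)+1$}. Indeed, given an induced subgraph $H$ of such a $G$, choose an induced subgraph $H'$ of $H$ with $\beta(H)=\delta(H')+1$; then $H'$ is again in the class, so $\chi(H)\ge\chi(H')\ge\delta(H')+1=\beta(H)\ge\chi(H)$, whence $\chi(H)=\beta(H)$. The cases of small minimum degree are immediate: if $\delta(G)\le 1$ there is nothing to prove, and if $\delta(G)=2$ then $G$ contains a cycle, hence an induced cycle, which — being neither $C_4$ nor a longer even hole — is a triangle or an odd hole, so $\chi(G)\ge 3$. (In the same vein, a bipartite even-hole-free graph is a forest, so $\delta(G)=2$ already forces $G$ to be non-bipartite.) Everything therefore reduces to the range $\delta(G)\ge 3$.

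For that range, the technical heart would be a decomposition theorem in the spirit of \autoref{thm:decom ehf}, but considerably simpler, because excluding the diamond forces $\omega(G)\le 3$ and makes every neighbourhood $N(v)$ induce a matching together with isolated vertices (so, e.g., the bisimplicial vertex of \autoref{thm:bisimplicial} has degree at most $4$). Concretely, I would aim to prove: \emph{every connected (even-hole,diamond)-free graph is a clique, a hole, or has a clique cutset}, where within the class a clique is one of $K_1,K_2,K_3$ and a hole is an odd hole $C_5,C_7,\dots$. Granting this, one argues by a routine induction along the clique-cutset decomposition that the class is $2$-degenerate (cliques $K_{\le 3}$ and odd holes are $2$-degenerate, and a leaf atom of the decomposition contributes a vertex all of whose neighbours lie in that atom, hence of degree $\le 2$ in $G$); and $2$-degeneracy gives $\beta(G)\le 3$, so $\delta(G)\ge 3$ never occurs and we are done. (Should $2$-degeneracy of the class fail, the fallback is to propagate $\chi(G)=\beta(G)$ itself through the decomposition: the base graphs are trivially $\beta$-perfect, and at a clique cutset $S$ with $G=G_1\cup_S G_2$ one has $\chi(G)=\max\{\chi(G_1),\chi(G_2)\}$ by merging the colourings on the clique $S$, while any induced subgraph realising $\beta(G)$ either sits inside some $G_i$ or is strictly smaller than $G$, so induction closes in either case.)

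The main obstacle is the decomposition theorem itself; its proof would follow the configuration-analysis methodology developed for even-hole-free graphs (as behind \autoref{thm:decom ehf} and \autoref{thm:bisimplicial}). Assuming $G$ has no clique cutset and is neither a clique nor a hole, one fixes a shortest hole (or a triangle) and studies how the remaining vertices attach to it, using diamond-freeness at every step — neighbourhoods are matchings, $\omega\le 3$ — to show that any non-trivial attachment produces an induced $C_4$, or else a longer even hole, a contradiction. A smaller but genuine point to watch is the combining step for $\beta$ at clique cutsets: unlike $\chi$, the parameter $\beta$ is not preserved by clique sums in general — adjoining a new vertex to the degree-$2$ vertices of two disjoint diamonds already yields minimum degree $3$, though each side is $2$-degenerate — so the combining argument must exploit that the two pieces lie in the class, and routing the whole proof through the $2$-degeneracy statement is the cleanest way to circumvent this.
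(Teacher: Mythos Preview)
Your opening reduction is clean and correct: since the class is hereditary, showing $\chi(G)\ge\delta(G)+1$ for every $G$ in the class does yield $\beta$-perfectness, and your treatment of $\delta(G)\le 2$ is fine. The trouble is everything you build on top of that.

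First, two factual slips. Excluding the diamond does \emph{not} force $\omega(G)\le 3$: every complete graph $K_n$ is diamond-free, since any four vertices of $K_n$ induce $K_4$, not $K_4-e$. Likewise, in a diamond-free graph each neighbourhood $N(v)$ induces a disjoint union of cliques (possibly large), not merely a matching plus isolated vertices. In particular $K_4$ already kills the $2$-degeneracy claim.

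More seriously, your proposed decomposition theorem --- ``every connected (even-hole, diamond)-free graph is a clique, a hole, or has a clique cutset'' --- is false. Take a triangle $abc$, a vertex $d$, and three subdivision vertices $a',b',c'$ with edges $aa',bb',cc'$ and $da',db',dc'$ (so $d$ is joined to each triangle vertex by a path of length~$2$). The only induced cycles are the triangle and three $5$-holes, so the graph is even-hole-free; the only triangle is $abc$ and no other vertex sees two of $a,b,c$, so it is diamond-free; it is neither a clique nor a hole; and one checks directly that no vertex, edge, or the triangle $\{a,b,c\}$ disconnects it, so it has no clique cutset. Since both your primary route ($2$-degeneracy) and your fallback (propagating $\chi=\beta$ across clique cutsets) rest on this decomposition, the argument does not go through.

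What the paper cites from \cite{kmv} is a genuinely richer decomposition --- using $2$-joins and \emph{bisimplicial cutsets} (a special kind of star cutset) in addition to clique cutsets --- and from it one deduces not $2$-degeneracy but the existence of a \emph{simplicial extreme}: a vertex that is either simplicial or of degree~$2$. That conclusion slots perfectly into your reduction (a simplicial vertex $v$ gives $\chi(G)\ge\omega(G)\ge d(v)+1\ge\delta(G)+1$, and degree~$2$ you already handled), but obtaining it requires the stronger structural toolkit, not clique cutsets alone.
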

The $\beta$-perfectness of (diamond,even-hole)-free graphs is a consequence
of the fact that every such graph contains a \emph{simplicial extreme}, namely a vertex
that is either simplicial or of degree two, which in turn follows from a decomposition theorem
for (diamond,even-hole)-free graphs that uses $2$-joins, clique cutsets
and bisimplicial cutsets (a special type of a star cutset).
The $\beta$-perfectness of (diamond,even-hole)-free graphs
implies that $\chi(G)$ can be computed in polynomial time
by coloring greedily on the particular ordering of vertices we described above.
\begin{coro}\label{coro:coldiamond}
\textsc{Chromatic Number} can be solved in $O(n^2)$ time for (even-hole,diamond)-free graphs.
\end{coro}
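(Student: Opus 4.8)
The plan is to obtain this as an immediate consequence of \autoref{thm:diamond} together with the elementary bound $\chi(G)\le\beta(G)$ that is built into the vertex ordering described above; the only real work will be the running-time analysis. Write $\sigma=(v_1,\dots,v_n)$ for that ordering, obtained by repeatedly deleting a minimum-degree vertex of the graph induced by the not-yet-deleted vertices and placing it immediately ahead of all previously deleted vertices (so the first vertex deleted is $v_n$). The algorithm is then: compute $\sigma$, and colour the vertices greedily in the order $v_1,v_2,\dots,v_n$, giving each vertex the least colour not used on an already-coloured neighbour.

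First I would verify correctness. Consider the moment $v_i$ is deleted, and let $G'$ be the subgraph induced by $v_i$ together with all vertices still present at that moment. Every neighbour of $v_i$ that precedes $v_i$ in $\sigma$ was deleted strictly after $v_i$, hence lies in $G'$; since $v_i$ was picked as a minimum-degree vertex of $G'$, there are at most $\delta(G')$ of them. So the greedy rule assigns $v_i$ a colour in $\{1,\dots,\delta(G')+1\}\subseteq\{1,\dots,\beta(G)\}$, whence the colouring is proper and uses at most $\beta(G)$ colours. By \autoref{thm:diamond}, $G$ is $\beta$-perfect, so $\chi(G)=\beta(G)$, and therefore the colouring produced is optimal and its size equals $\chi(G)$.

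Next I would bound the time. The ordering $\sigma$ can be computed by the standard bucket-based minimum-degree elimination: keep the remaining vertices in buckets indexed by current degree, repeatedly extract a vertex from the lowest nonempty bucket, and move each of its neighbours down one bucket; this takes $O(n+m)$ time. The greedy colouring is likewise $O(n+m)$: when colouring $v_i$, scan its already-coloured neighbours, record the colours they use in a length-$n$ array, and return the first unused index. Since $m\le\binom{n}{2}$, the whole procedure runs in $O(n^2)$ time. (One could also observe that $C_4$-free graphs have $m=O(n^{3/2})$ to sharpen this, but it is not needed here.)

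I do not anticipate a genuine obstacle: essentially all the structural content is already packaged in \autoref{thm:diamond}. The one step that deserves a careful word is the correctness argument — that greedy colouring along \emph{this particular} ordering never exceeds $\beta(G)$ colours — but this is exactly the classical fact that a graph's colouring number is one more than its degeneracy.
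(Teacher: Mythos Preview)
Your proposal is correct and follows essentially the same route as the paper: both compute the minimum-degree elimination ordering, colour greedily along it, and invoke \autoref{thm:diamond} so that the resulting $\beta(G)$-colouring is in fact optimal. The only difference is cosmetic --- you use the standard $O(n+m)$ bucket-based degeneracy computation where the paper is content with a cruder $O(n^2)$ bound for finding the ordering --- but the overall $O(n^2)$ conclusion and the logic are identical.
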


\begin{proof}
Let $G$ be a (even-hole,diamond)-free graph.
By \autoref{thm:diamond}, $G$ is $\beta$-perfect.
This implies that $\chi(G)=\beta(G)$.
More accurately, let $v_1,v_2,\ldots,v_n$ be the linear ordering
obtained from the procedure we described above, i.e.,  $v_i$
is a vertex of minimum degree in $G_i=G[\{v_1,\ldots,v_i\}]$.
Then $\chi(G)\le \max\{\delta(G_i)+1:1\le i\le n\}\le \beta(G)=\chi(G)$.
This means that $\chi(G)=\max\{\delta(G_i)+1:1\le i\le n\}$.
Clearly, it takes $O(i)$ time to find $v_i$ in $G_i$ for each $i$.
Thus, finding such a linear ordering can be done in $O(n^2)$ time.
Moreover, greedily coloring $G$ on $v_1,\ldots,v_n$ can be done in $O(m+n)$
time. Therefore, the corollary holds
\end{proof}

In addition, the existence of a simplicial extreme immediately implies that
the class of (even-hole, diamond)-free graphs is a $\chi$-bounded family with
$\chi$-binding function $f(x)=x+1$.
\begin{coro}\bfcite{kmv}\label{coro:omegaplus1}
For any (even-hole, diamond)-free graph $G$, $\chi(G)\le \omega(G)+1$.
\end{coro}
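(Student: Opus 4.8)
The plan is a straightforward induction on $n=|V(G)|$, using the structural fact underlying \autoref{thm:diamond}: every (even-hole,diamond)-free graph contains a \emph{simplicial extreme}, i.e., a vertex that is either simplicial or of degree $2$. Since the class of (even-hole,diamond)-free graphs is hereditary, this property is inherited by every induced subgraph, which is exactly what makes it usable for an inductive colouring argument. (Equivalently, one can phrase the whole thing through \autoref{thm:diamond} directly: $\beta$-perfectness gives $\chi(G)=\beta(G)$, and the simplicial-extreme fact, applied to each induced subgraph $G'$, yields $\delta(G')\le\omega(G')\le\omega(G)$, hence $\beta(G)\le\omega(G)+1$. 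I will present the inductive version, since it is self-contained.)

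For the base case, if $G$ has no edge then $\chi(G)\le 1\le\omega(G)+1$. For the inductive step, assume $G$ has an edge and let $v$ be a simplicial extreme of $G$. The graph $G-v$ is again (even-hole,diamond)-free and $\omega(G-v)\le\omega(G)$, so by the induction hypothesis $G-v$ admits a proper colouring $\phi$ using at most $\omega(G-v)+1\le\omega(G)+1$ colours. It then remains to show that $\phi$ extends to $v$, i.e., that strictly fewer than $\omega(G)+1$ colours appear on $N_G(v)$, so that some colour is free for $v$.

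I would split into the two cases defining a simplicial extreme. If $v$ is simplicial, then $N_G[v]$ is a clique, so $|N_G(v)|=|N_G[v]|-1\le\omega(G)-1<\omega(G)+1$, and a free colour for $v$ exists. If instead $v$ has degree $2$, then $v$ is incident to an edge, so $\omega(G)\ge 2$ and hence $\omega(G)+1\ge 3>2\ge|N_G(v)|$; again at least one of the $\omega(G)+1$ colours is missing from $N_G(v)$. In either case we colour $v$ with such a colour, which completes the induction and the proof.

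The only genuine content here is the existence of a simplicial extreme, which is supplied by \autoref{thm:diamond} (more precisely by the decomposition-based argument of \bfcite{kmv}); the rest is the routine greedy extension above, and indeed the text already flags that this "immediately implies" the bound. The one point that deserves a moment's care is the degree-$2$ case, where $\omega(G)$ can be as small as $2$ — but then $\omega(G)+1=3$ still strictly exceeds $|N_G(v)|$, so nothing breaks; and note we do not need $\omega(G-v)=\omega(G)$, only the harmless inequality $\omega(G-v)\le\omega(G)$.
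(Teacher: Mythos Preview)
Your proof is correct and follows exactly the approach the paper indicates: the paper does not give an explicit proof but states just before the corollary that ``the existence of a simplicial extreme immediately implies'' the bound $\chi(G)\le\omega(G)+1$, which is precisely the inductive/greedy argument you spell out. Your case split and the handling of the degree-$2$ case (via $\omega(G)\ge 2$) are exactly the routine details behind that ``immediately''.
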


\subsection{Subclasses of even-hole-free graphs}

Very recently,  efforts are made on subclasses of even-hole-free graphs
by forbidding additional graphs  besides even holes .
The result of diamond-free graphs \bfcite{kmv} already demonstrates the richness
of this approach.

A \emph{pan} is a graph induced by a hole with an additional vertex pendent
to some vertex on the hole. 
Cameron, Chaplick and Ho\`{a}ng  \bfcite{CCH15} investigated (even-hole,pan)-free graphs.
They first obtained a decomposition theorem for (even-hole,pan)-free graphs:
every such graph can be decomposed via clique cutset into (essentially) unit circular-arc graphs.
The decomposition allows them to obtain an $O(mn)$ recognition algorithm and a polynomial time
coloring algorithm. Although the class of (even-hole,pan)-free graphs is not $\beta$-perfect, it was shown
to be $\chi$-bounded with $\chi$-binding function $f(x)=\frac{3}{2}x$.

A \emph{cap} is a graph  induced by a hole with an additional vertex that is adjacent to exactly
two adjacent vertices on the hole. 
A graph is \emph{cap-free} if it does not contain any cap as an induced subgraph.
It was shown by Conforti, Gerards and Pashkovich \bfcite{CGP15}
that the problem of weighted maximum independent set can be solved in polynomial time for (even-hole,cap)-free
graphs. 
We study \textsc{Chromatic Number} for (even-hole,cap)-free graphs below.
Like the pan-free case, (even-hole,cap)-free graphs need not
to be $\beta$-perfect. We show that the class of (even-hole,cap)-free graphs is a $\chi$-bounded family
with $\chi$-binding function $f(x)=\frac{3}{2}x$. The following is the main result in this paper.

\begin{thm}\label{thm:ehbf_main}
For any (even-hole,cap)-free graph $G$,
$\chi(G)\le \lfloor\frac{3}{2}\omega(G) \rfloor$.
\end{thm}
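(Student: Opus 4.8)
The plan is to establish \autoref{thm:ehbf_main} by induction on $|V(G)|$, first reducing to \emph{atoms} via the clique-cutset operation and then invoking the decomposition theorem of Conforti, Cornu\'ejols, Kapoor and Vu\v{s}kovi\'c for cap-free graphs. If $\omega(G)\le 2$ there is nothing to do: $G$ is even-hole-free, so $\chi(G)\le 2\omega(G)-1$ by \autoref{thm:bisimplicial}, and $2\omega(G)-1\le\lfloor\frac{3}{2}\omega(G)\rfloor$ whenever $\omega(G)\le 2$. So assume $\omega(G)\ge 3$; in particular $G$ contains a triangle. We may also assume $G$ is connected, since $\chi$, $\omega$ and colorings are all determined componentwise.

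Next we reduce to atoms. If $G$ has a clique cutset $C$ then $V(G)\setminus C=A\sqcup B$ for nonempty, mutually anticomplete $A,B$; the graphs $G[A\cup C]$ and $G[B\cup C]$ are (even-hole,cap)-free, strictly smaller than $G$, and of clique number at most $\omega(G)$, so by induction each admits a proper coloring with at most $\lfloor\frac{3}{2}\omega(G)\rfloor$ colors. As $C$ is a clique its vertices receive pairwise distinct colors in both colorings, so after permuting colors on one side the two colorings agree on $C$ and glue into a proper $\lfloor\frac{3}{2}\omega(G)\rfloor$-coloring of $G$. Hence we may assume $G$ is a connected atom.

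Now $G$ is a connected (even-hole,cap)-free atom containing a triangle, and the Conforti--Cornu\'ejols--Kapoor--Vu\v{s}kovi\'c decomposition applies: either $G$ belongs to one of the \emph{basic} classes of the theorem, or $G$ admits one of the remaining decompositions (an amalgam or a $2$-join). In the basic cases the bound is obtained directly: for cliques and chordal graphs $\chi(G)=\omega(G)$; for an odd hole $\chi(G)=3=\lfloor\frac{3}{2}\cdot2\rfloor$; if $G$ has a universal vertex $v$ then $\chi(G)=\chi(G-v)+1\le\lfloor\frac{3}{2}(\omega(G)-1)\rfloor+1\le\lfloor\frac{3}{2}\omega(G)\rfloor$ by induction; for the infinite basic families (blow-ups of odd holes and the like) a short explicit coloring stays within $\lfloor\frac{3}{2}\omega(G)\rfloor$; and the finitely many sporadic basic graphs --- among them the Haj\'os graph, where equality is attained --- are checked by hand. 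If instead $G$ has an amalgam or $2$-join with blocks $(G_1,G_2)$, we color each $G_i$ by induction and recombine, using that the attachment (special) sets on each side are cliques, so their clique numbers contribute additively to $\omega(G)$ and the two color budgets can be merged without exceeding $\lfloor\frac{3}{2}\omega(G)\rfloor$.

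The recombination across a $2$-join or amalgam is where I expect the main difficulty. A naive ``color each block, then match the colorings on the attachment sets'' argument can overshoot by one color precisely when both blocks have \emph{even} clique number; the model case of a homogeneous set $H$ already exhibits this, since it yields only $\chi(G)\le\chi(G/H)+\chi(G[H])-1$, which can exceed $\lfloor\frac{3}{2}\omega(G)\rfloor$ when $\omega(G/H)$ and $\omega(G[H])$ are both even. Making the recombination tight will require exploiting the clique structure of the special sets more carefully --- for instance reusing a single color class across the cut, or showing that the problematic even/even configuration cannot occur inside an (even-hole,cap)-free atom --- and this, together with pinning down the exact list of basic graphs produced by the decomposition theorem, is the crux of the argument.
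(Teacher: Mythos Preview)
Your proposal has two genuine gaps.

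First, you misstate the Conforti--Cornu\'ejols--Kapoor--Vu\v{s}kovi\'c decomposition for cap-free graphs. That theorem (\autoref{thm:decom cap}) has no list of ``basic classes'' and no $2$-joins: it simply says that every cap-free graph containing a triangle either has a clique cutset, a universal vertex, or an amalgam. There are no sporadic graphs to check and no ``blow-ups of odd holes'' appearing as basic outcomes. You seem to be conflating this with the even-hole-free decomposition (\autoref{thm:decom ehf}), which uses star cutsets and $2$-joins and is not what is invoked here.

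Second, and more seriously, you correctly diagnose that a naive recombination of colorings across an amalgam can overshoot by one color, but you offer no fix --- and this is exactly the step the paper does \emph{not} attempt. Instead the paper makes a key structural observation (\autoref{lem:twin}): in a $C_4$-free graph with no clique cutset, any amalgam forces a pair of \emph{twin} vertices (one side $A_i$ must be a clique or a $4$-hole appears; then $V_i=\emptyset$ since otherwise $A_i\cup K$ is a clique cutset; hence $|A_i|\ge 2$ and any two vertices of $A_i$ are twins). Combined with \autoref{thm:decom cap} this gives \autoref{thm:ehcap}: an (even-hole,cap)-free atom with no universal vertex and no twins is triangle-free. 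The proof of \autoref{thm:ehbf_main} then proceeds by taking one representative from each twin-equivalence class; this quotient $G'$ is still an atom without universal vertices, hence triangle-free and even-hole-free, so $\chi(G')\le 3$ by \autoref{coro:omegaplus1}. Since every maximal clique of $G$ is a union of exactly two twin classes, removing $G'$ drops $\omega$ by exactly $2$, and induction on $G-G'$ yields $\chi(G)\le\chi(G-G')+3\le\frac{3}{2}(\omega(G)-2)+3=\frac{3}{2}\omega(G)$. This ``peel off a $3$-colorable layer that drops $\omega$ by $2$'' argument sidesteps the amalgam-recombination problem entirely, and is the idea missing from your outline.
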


\section{Decomposition of cap-free graphs}\label{sec:cap}

In $1999$, Conforti, Cornu\'ejols, Kapoor and Vu\v{s}kovi\'c \bfcite{CCKV99}
proved a decomposition theorem for cap-free graphs.
To state their decomposition, we first define a special kind of `cutset'.
Let $X = (V_1, A_1, V_2, A_2, K)$ be an array of disjoint sets with union $V(G)$.
We say that $X$ is an \emph{amalgam} of $G$ if the following properties hold:

\begin{itemize}
\item $A_1$ and $A_2$ are complete to each other and both are non-empty.
\item $K$ is a clique (possibly empty) and $K$ is complete to $A_1 \cup A_2$.
\item $V_1$ is anti-complete to $A_2 \cup V_2$ and $V_2$ is anti-complete to $A_1 \cup V_1$.
\item $|V_1 \cup A_1| \geq 2$ and $|V_2 \cup A_2| \geq 2$.
\end{itemize}

Note that possibly $K$ may have neighbors in $V_1 \cup V_2$.

\begin{thm}\bfcite{CCKV99}\label{thm:decom cap}
Every cap-free graph with a triangle either admits
an amalgam or a clique cutset or contains a universal vertex.
\end{thm}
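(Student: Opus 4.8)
The plan is to prove the equivalent statement: if $G$ is cap-free, has a triangle, has no clique cutset, and has no universal vertex, then $G$ admits an amalgam. Begin with routine reductions. A disconnected graph is handled component by component, so assume $G$ is connected. If $\overline{G}$ is disconnected then $G$ is a join $G_1+G_2$ of two nonempty graphs; if some $G_i$ is a single vertex that vertex is universal, and otherwise $(V_1,A_1,V_2,A_2,K)=(\emptyset,V(G_1),\emptyset,V(G_2),\emptyset)$ is an amalgam. So we may assume $G$ is connected and co-connected. Since $G$ has a triangle it has at least three vertices, and since it has no universal vertex it is not complete; because a chordal graph with no clique cutset is complete, $G$ must contain a hole $H$.

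The heart of the argument is an analysis of how $G$ attaches to a carefully chosen hole $H$ — for instance a shortest hole, or one chosen to minimize the attachments around it. For $v\notin V(H)$ write $N_H(v)=N(v)\cap V(H)$; cap-freeness says $N_H(v)$ is never a single edge of $H$, so each such $v$ misses $H$, or has a unique neighbor on $H$, or has at least two neighbors with no two consecutive among just the ``size two'' case, or has three or more neighbors. The first step is to examine a connected component $C$ of $G-N[V(H)]$: its neighborhood $N(C)$ lies among the distance-one vertices, and if $N(C)$ is a clique we obtain a clique cutset separating $C$ from $V(H)$, so we may assume $N(C)$ has two non-adjacent vertices $x,y$; a shortest $x$--$y$ path through $C$ together with a suitable subpath of $H$ then either produces a cap or forces strong structure on $N(H)$. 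One iterates this ``component attachment is a clique cutset, or there is a crossing path, or we exhibit a cap'' scheme: first on $G-N[V(H)]$, then on the vertices with exactly one neighbor on $H$, then on those with several neighbors on $H$ (including ``hubs'' universal to $H$). Each branch is squeezed until the vertex set splits as $V_1\cup A_1\cup V_2\cup A_2\cup K$ with $A_1$ complete to $A_2$, $K$ a clique complete to $A_1\cup A_2$, and the two sides anti-complete across — which is precisely the amalgam. The triangle hypothesis is used both to exclude the ``triangle-free'' degenerate outcome of this peeling and to supply the clique $K$ and keep $A_1,A_2$ nonempty, while the absence of clique cutsets is what turns every ``this attachment is a clique'' branch into a real constraint rather than a dead end.

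The main obstacle is the case analysis itself. Cap-freeness does not bound the number of neighbors a vertex may have on $H$ — a vertex can even be complete to $H$, as in a wheel — so the vertices with large attachment must be controlled by combining cap-freeness on the many sub-holes they create with the no-clique-cutset hypothesis and the position of the triangle; this is where the proof branches most heavily. Two further delicate points are choosing $H$ extremally so that the peeling terminates and the crossing-path arguments yield holes short enough to serve as the rim of a cap, and treating short holes ($4$-holes and $5$-holes) together with the various ways a triangle can meet $H$ as separate base cases, since the generic argument needs enough room on $H$ to embed a cap. Modulo this bookkeeping, in every branch one finds a cap (contradicting the hypothesis), a clique cutset, or a universal vertex, and the one surviving configuration is exactly the amalgam.
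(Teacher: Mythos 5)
This statement is an imported result (Conforti, Cornu\'ejols, Kapoor and Vu\v{s}kovi\'c, 1999); the paper you are working from gives no proof of it, so there is nothing internal to compare your argument against. Judged on its own, your proposal is a proof plan rather than a proof, and the gap is precisely where the theorem lives. The reductions at the start (connectivity, co-connectivity, the join case, the chordal case) are fine, and the observation that cap-freeness forbids $N_H(v)$ from being exactly one edge of $H$ is correct. But the entire content of the theorem is the claim that ``each branch is squeezed until the vertex set splits as $V_1\cup A_1\cup V_2\cup A_2\cup K$,'' and you assert this without exhibiting, in even one branch, how the special sets $A_1$, $A_2$ and the clique $K$ complete to $A_1\cup A_2$ actually emerge from the attachment analysis. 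The phrase ``modulo this bookkeeping'' conceals the theorem itself: nothing in the hole-peeling scheme as described explains why the non-clique attachments should organize themselves into a \emph{join} of two nonempty sets through a clique, as opposed to, say, a star cutset or a $2$-join, which is what this style of argument typically yields for even-hole-free graphs.

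There is also a structural mismatch worth flagging. The hypothesis of the theorem is the presence of a \emph{triangle}, and the amalgam it produces is anchored at a clique-join ($K$ complete to $A_1\cup A_2$, $A_1$ complete to $A_2$) --- a configuration generated by triangles, not by holes. The published proof accordingly builds the decomposition around a triangle: roughly, one grows a maximal ``triangle-connected'' substructure (edges linked through chains of triangles sharing edges) and uses cap-freeness to show that the rest of the graph attaches to it through cliques, which yields the universal vertex, clique cutset, or amalgam. Your plan instead anchors everything on an extremal hole $H$ and relegates the triangle to ``excluding the degenerate outcome and supplying $K$,'' but you never say how a triangle that may sit far from $H$ supplies the clique $K$ or keeps $A_1,A_2$ nonempty. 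As written, the proposal would need the heavy case analysis to be carried out before one could tell whether the hole-based route closes at all; in its current form it does not establish the statement.
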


Therefore, cap-free graph can be built from triangle-free graphs.
We say that two vertices $u$ and $v$ are \emph{twins} in $G$
if $N[u]=N[v]$, and that $G$ contains twin vertices if
there are vertices that are twins in $G$.
We notice in the following that if we forbid even holes in cap-free graphs,
then an amalgam of $G$ gives rise to twin vertices.

\begin{lem}\label{lem:twin}
Suppose that $G$ is an (even-hole, cap)-free graph containing no clique cutset.
If $G$ contains an amalgam $X = (V_1, A_1, V_2, A_2, K)$, then $G$ contain a
pair of twin vertices.
\end{lem}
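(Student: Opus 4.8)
The plan is to suppose, for contradiction, that $G$ has no two twin vertices, and to extract from the amalgam either an induced $4$-hole or a clique cutset, unless $G$ is forced into a shape that visibly contains twins. Throughout I would assume $G$ is connected (which we may do), and I note in advance that only the even-hole-freeness of $G$, and not cap-freeness, is needed for this lemma.

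The first step is to rule out the case $V_1\neq\emptyset$ and $V_2\neq\emptyset$. In that case, since $V_1$ is anti-complete to $A_2\cup V_2$ and $A_2\neq\emptyset$, the set $A_1\cup K$ separates $V_1$ from $A_2\cup V_2$ and is therefore a cutset of $G$; as $G$ has no clique cutset, $A_1\cup K$ is not a clique, and since $K$ is a clique complete to $A_1$ this forces two non-adjacent vertices $a_1,a_1'$ inside $A_1$. By the symmetric argument applied to $A_2\cup K$, there are two non-adjacent vertices $a_2,a_2'$ inside $A_2$. But then $a_1a_2a_1'a_2'$ is an induced $4$-hole: all four edges between $\{a_1,a_1'\}$ and $\{a_2,a_2'\}$ are present by completeness of $A_1$ to $A_2$, while $a_1a_1'$ and $a_2a_2'$ are the only possible chords and are absent. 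This contradicts even-hole-freeness. Hence $V_1=\emptyset$ or $V_2=\emptyset$, and after possibly swapping the two sides of the amalgam I may assume $V_2=\emptyset$, so that $|A_2|\ge 2$.

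Next I would produce the twins. If $A_2$ is a clique, take two distinct $b,b'\in A_2$; then $N[b]=A_1\cup A_2\cup K=N[b']$, because a vertex of $A_2$ is adjacent to all of $A_1$ and $K$ (completeness), to all of $A_2\setminus\{b\}$ (since $A_2$ is a clique), and to no vertex of $V_1$ (anti-completeness), while $V_2=\emptyset$. So $b$ and $b'$ are twins, a contradiction. Otherwise $A_2$ is not a clique; fix non-adjacent $b,b'\in A_2$. If $A_1$ were also not a clique, the $4$-hole argument of the first step applies verbatim, so $A_1$ is a clique, and hence $A_1\cup K$ is a clique. If $V_1\neq\emptyset$, then $A_1\cup K$ separates $V_1$ from the non-empty set $A_2$, giving a clique cutset --- a contradiction. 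Thus $V_1=\emptyset$, so $V(G)=A_1\cup A_2\cup K$ with $A_1\cup K$ a clique that is complete to $A_2$; in particular every vertex of $A_1$ is universal in $G$, and since $|A_1|\ge 2$, any two vertices of $A_1$ are twins, the final contradiction.

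The step I expect to be the crux is the first one: recognising that $A_1\cup K$ (and $A_2\cup K$) is genuinely a cutset, and then seeing that two non-adjacent pairs on opposite sides of the amalgam must span a $C_4$. This rests entirely on reading the amalgam axioms correctly --- that each $V_i$ is anti-complete to the whole opposite side, that $A_i\neq\emptyset$, and that $K$ is a clique complete to $A_1\cup A_2$. Once both sides are prevented from being non-cliques simultaneously, the remaining case analysis is short, and the twins are exhibited explicitly either inside the clique $A_2$, or inside $A_1$ after $G$ degenerates to a clique joined to $G[A_2]$.
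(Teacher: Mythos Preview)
Your proof is correct and uses exactly the same two ingredients as the paper --- a $C_4$ forces at least one of $A_1,A_2$ to be a clique, and absence of clique cutsets then forces the matching $V_i$ to be empty --- just organized less directly. The paper begins by observing outright that at least one $A_i$ (say $A_1$) is a clique (else a $C_4$ on two non-adjacent pairs), then that $V_1=\emptyset$ (else $A_1\cup K$ is a clique cutset), so $|A_1|\ge 2$ and any two vertices of $A_1$ are twins; this sidesteps both your opening case split on $V_1,V_2$ and the subsequent split on whether $A_2$ is a clique.
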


\begin{proof}
Suppose that both $A_1$ and $A_2$ are not cliques. Then
$A_1$ (respectively $A_2$) contains two non-adjacent vertices, say, $u,u'$ (respectively $v,v'$).
But then $\{u,u',v,v'\}$ induces a 4-hole. So at least one of $A_1$ and  $A_2$ induces
a clique. By symmetry, we assume that $A_1$ induces a clique.

If $V_1 \neq \emptyset$, then $A_1 \cup K$ is a clique cutset separating
$V_1$ from $V_2 \cup A_2$. So $V_1 = \emptyset$, and therefore $|A_1| \geq 2$.
But then any two vertices of $A_1$ are twins in $G$.
\end{proof}

Note that the proof of \autoref{lem:twin} makes use of merely the absence of $4$-holes.
The following decomposition of (even-hole,cap)-free graphs is an immediate consequence
of \autoref{thm:decom cap} and \autoref{lem:twin}.
\begin{thm}\label{thm:ehcap}
Suppose that $G$ is (even-hole,cap)-free graph that contains no universal vertices,
no twin vertices, and no clique cusets. Then $G$ is triangle-free.
\end{thm}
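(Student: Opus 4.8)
The plan is to argue by contradiction and chain together the two facts just assembled, namely the Conforti--Cornu\'ejols--Kapoor--Vu\v{s}kovi\'c decomposition (\autoref{thm:decom cap}) and \autoref{lem:twin}. Assume $G$ is an (even-hole,cap)-free graph with no universal vertex, no twin vertices, and no clique cutset, and suppose for contradiction that $G$ contains a triangle. Since $G$ is in particular cap-free and contains a triangle, \autoref{thm:decom cap} applies and yields that $G$ either admits an amalgam, or admits a clique cutset, or contains a universal vertex.

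The last two outcomes are excluded outright by the hypotheses on $G$, so $G$ must admit an amalgam $X=(V_1,A_1,V_2,A_2,K)$. Now I would invoke \autoref{lem:twin}: its hypotheses are exactly that $G$ is (even-hole,cap)-free and has no clique cutset, both of which hold, so the existence of the amalgam forces $G$ to contain a pair of twin vertices. This contradicts the assumption that $G$ has no twin vertices, and therefore $G$ has no triangle, i.e.\ $G$ is triangle-free.

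There is essentially no technical obstacle here; the content was already spent in \autoref{lem:twin} and \autoref{thm:decom cap}. The only point requiring care is bookkeeping of hypotheses: one must check that the assumption of a triangle is precisely what activates \autoref{thm:decom cap}, that ``no universal vertex'' and ``no clique cutset'' discard two of its three branches, and that the remaining amalgam branch feeds cleanly into \autoref{lem:twin} (which itself uses only the absence of $4$-holes, together with the no-clique-cutset hypothesis, to collapse some $V_i$ to $\emptyset$ and expose twins inside $A_i$). So the proof is a short deduction rather than a genuine argument.
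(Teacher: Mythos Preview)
Your proof is correct and is exactly the deduction the paper intends: the theorem is stated as an immediate consequence of \autoref{thm:decom cap} and \autoref{lem:twin}, and your contradiction argument is precisely that chain.
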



\section{Coloring (even-hole, cap)-free graphs}\label{sec:approx algo}

In this section, we prove our main result in this paper.
Then we turn our proof into a polynomial-time approximation algorithm.
First we note that `two vertices being twin vertices' in fact
defines an equivalence relation.
\begin{obse}
Let $\sim_T$ be the binary relation of two vertices being twins.
Then $\sim_T$ is an equivalence relation. Moreover,
each equivalence class is a clique and for any two equivalence classes $X$ and $Y$,
$X$ and $Y$ are either complete or anti-complete to each other.
 \end{obse}

 \begin{proof}
 Clearly, $\sim_T$ is reflexive and symmetric.
 It remains to show transitivity. Suppose that
 $u$, $v$ and $w$ are three vertices so that
 $u$, $v$ are twins and $v$, $w$ are twins.
 Then $N[u]=N[v]=N[w]$. Therefore, $u$ and $w$ are twins.
Since any pair of twins are adjacent, each equivalence class is a clique.
Similarly, if a vertex in class $X$ is adjacent to a vertex in class $Y$,
then $X$ is complete to $Y$; otherwise $X$ is anti-complete to $Y$.
 \end{proof}


%

We are now ready to prove \autoref{thm:ehbf_main}.

\begin{proof}[Proof of \autoref{thm:ehbf_main}]
We prove the theorem by induction on $|G|$.
We may assume that $G$ is connected, for otherwise
applying inductive hypothesis to each connected component of $G$ completes the proof.
If $G$ contains a universal vertex $u$, then $G-u$
has $\chi(G-u)\le \frac{3}{2}\omega(G-u)$.
Clearly, $\chi(G)=\chi(G-u)+1$ and $\omega(G)=\omega(G-u)+1$.
It follows that
\[\chi(G)=\chi(G-u)+1\le \frac{3}{2}\omega(G-u)+1=\frac{3}{2}(\omega(G)-1)+1\le \frac{3}{2}\omega(G).\]

If $G$ contains a clique cutset $K$, then $G-K$ is a disjoint union of two subgraphs $H_1$ and $H_2$.
Let $G_i=H_i\cup K$ for $i=1,2$. Then $\chi(G)=\max\{\chi(G_1),\chi(G_2)\}$.
Thus,
\[\chi(G)=\max\{\chi(G_1),\chi(G_2)\}\le \max\{\frac{3}{2}\omega(G_1),\frac{3}{2}\omega(G_2)\}\le \frac{3}{2}\omega(G).\]

Therefore, $G$ has no universal vertices or clique cutsets. Now we partition $V(G)$
into equivalence classes $T_1,T_2,\ldots,T_r$ under $\sim_T$.
Take an arbitrary vertex $t_i\in T$ for $1\le i\le r$ and let $G'=G[\{t_1,\ldots,t_r\}]$.
Note that $G'$ is obtained from $G$ by successively removing twin vertices.
We claim that removing twin vertices does not create a clique cuset or a universal vertex.

\begin{cla}\label{lem:twin removal}
Suppose that $u$ and $v$ are twin in $G$. Then $G-u$ does not contain
any universal vertex or clique cutset.
 \end{cla}

\begin{proof}[Proof of \autoref{lem:twin removal}]
Suppose not.
If $G-u$ contains a universal vertex, say $x$.
Then $x$ is adjacent to each vertex in $G-u$, in particular to $v$.
This implies that $x$ is also adjacent to $u $, since $u$ and $v$ are twins in $G$.
Now $x$ is a universal vertex in $G$, contradicting to our assumption.
So, $G-u$ contains no universal vertices.

Suppose that $G-u$ contains a clique cutset $K$.
Now $G-K$ is the disjoint union of two vertex-disjoint subgraphs $H_1$ and $H_2$.
Let $G_i$ be the subgraph of $G$ induced by $V(H_i)\cup K$ for $i=1,2$.
If $v$ is in $H_1$ or $H_2$, then $K$ is still a clique cutset in $G$.
So, $v\in K$. But then $K\cup \{u\}$ is a clique cutset of $G$, a contradiction.
\end{proof}

By \autoref{lem:twin removal} and \autoref{thm:ehcap}
we conclude that $G'$ is triangle-free, and so $\chi(G')\le 3$ by \autoref{coro:omegaplus1}.
On the other hand, note that $G'$ is connected (since $G$ is connected).
In particular, each vertex of $G'$ lies in an edge of $G'$.
Therefore, any maximal clique in $G'$ is an edge.
This means that any maximal clique of $G$ is a union of two $T_j$'s,
which implies that $\omega(G-G')= \omega(G)-2$.
By inductive hypothesis, $\chi(G-G')\le \frac{3}{2}\omega(G-G')$.
Then
\[\chi(G)\le \chi(G-G')+\chi(G')\le \frac{3}{2}\omega(G-G')+3= \frac{3}{2}(\omega(G)-2)+3=
\frac{3}{2}\omega(G).\]
Since $\chi(G)$ is an integer, the theorem follows.
\end{proof}

\begin{figure}[htbp]
\centering
\includegraphics[width=0.3\textwidth]{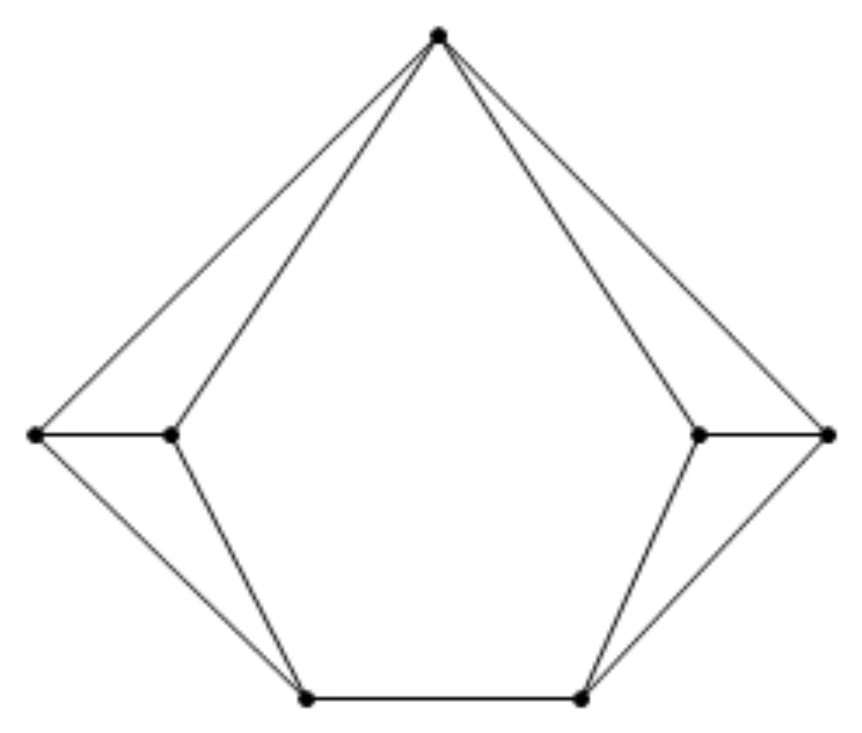}
\caption{The Hajos graph.}
\hypertarget{hajos}{}
\label{hajos}
\end{figure}

The bound in \autoref{thm:ehbf_main} is attained by odd-holes
and the \hajos.
Note that these graphs have clique number at most $3$. For graphs
with large clique number, we do not have an example
showing that the bound is tight. Nevertheless, the optimal constant
is at least $5/4$. For any integer $k\ge 1$, let $G_k$ be the graph obtained from a $5$-hole by  replacing
each vertex of the $5$-hole with a clique of size $2k$ and making two cliques complete
(respectively anti-complete) if the two original vertices are adjacent (respectively non-adjacent)
on the $5$-hole. Clearly, $|G_k|=10k$, $\alpha(G_k)=2$ and $\omega(G_k)=4k$.
Hence, $\chi(G_k)\ge \frac{|G_k|}{\alpha(G_k)}=5k$.
Moreover, it is easy to see that $G_k$ does admit a $5k$-coloring.
So,  $\chi(G_k)=5k=\frac{5}{4}\omega(G_k)$.
A natural question is that whether or not  one can reduce $\lfloor3/2\rfloor$ to $\lceil5/4\rceil$.
\begin{prob}
Is it true that $\chi(G)\le \lceil\frac{5}{4}\omega(G)\rceil$ for every (even-hole,cap)-free graph $G$?
\end{prob}

It was shown in \bfcite{CKS07} that
this is true for the class of $(C_4,P_5)$-free graphs which is a subclass of (even-hole,cap)-free graphs.

\textbf{A $3/2$- approximation algorithm}

We now turn our proof of \autoref{thm:ehbf_main}  into a $3/2$-approximation algorithm
for computing $\chi(G)$ if $G$ is (even-hole,cap)-free.
The algorithm outputs a $\frac{3}{2}\omega(G)$-coloring
of $G$ in polynomial time. We need one more observation.

\begin{obse}\label{obse:ucs}
Suppose that  $G$ is a graph without clique cutsets.
If  $u\in V(G)$ is a universal vertex,
then $G-u$ contains no clique cutsets.
\end{obse}

\begin{proof}
If $K$ is a clique cutset in $G-u$, then $K\cup \{u\}$
is a clique cutset in $G$.
\end{proof}

The proof of \autoref{thm:ehbf_main} is almost algorithmic except for the last step
where we deal with $G$ with no clique cutsets or universal vertices.
Essentially we want  to successively remove
a triangle-free subgraph, one vertex from each equivalence class,  from $G$
so that the removal of it reduces the clique number of the graph exactly by $2$.
During the removal process, however, if the graph becomes disconnected, a maximal clique
could just be one equivalence class, say $T_i$ (that forms a connected component of the graph).
If  $T_i$ happens to be a maximum clique of the current graph, then
removing a single vertex from $T_i$ may reduce the clique number by at most $1$.
This happens when either $T_i$ has at least two vertices or the current graph is just
an independent set. But both cases have an easy fix. In the former case,
we simply remove two vertices from $T_i$, and in the latter case
we color the independent set with one new color that has not been used
(at this point all vertices of $G$ have been colored).
Clearly, the number of subgraphs we removed is at most $\omega(G)/2=O(n)$.
Moreover, each time it takes $O(m+n)$ time (determining the connected components)
to find such a subgraph. Therefore, it takes $O(mn)$ time in total for finding subgraphs.
On the other hand, by \autoref{coro:coldiamond} we can color all subgraphs
in $O(n^2)$ time.
\begin{lem}\label{lem:colatom}
Suppose that $G$ is a (even-hole,cap)-free graph without universal vertices or clique cutsets.
If the equivalence classes $T_i$'s under $\sim_T$ are given,
one can find a $\frac{3}{2}\omega(G)$-coloring for $G$
in $O(mn)$ time.
\end{lem}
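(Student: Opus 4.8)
The plan is to iterate the last step of the proof of \autoref{thm:ehbf_main}. Maintain a current graph $H$, initially $H=G$, together with the set of colours used so far. One \emph{round} does the following: compute the connected components of $H$ and its equivalence classes under $\sim_T$; form the \emph{skeleton} $H'$ of $H$ by keeping one vertex from each class, except that for every component of $H$ that is a clique $Q$ with $|Q|\ge 2$ and $|Q|=\omega(H)$ we keep \emph{two} vertices of $Q$; colour $H'$ with fresh colours using the $\beta$-perfect greedy procedure of \autoref{coro:coldiamond}; delete $V(H')$ from $H$; and repeat until $H$ is empty. The equivalence classes are supplied for the first round and, like the connected components, can be recomputed in $O(n+m)$ time afterwards.

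For correctness I would first record a structural invariant: at the start of every round $H$ is the blow-up of its skeleton $H'$ (inflate each class back to a clique, which is legitimate because each class is a clique and any two classes are complete or anti-complete), and, crucially, $H'$ is triangle-free and even-hole-free. For $H=G$ this is \autoref{thm:ehcap}: the skeleton $G'$ has no twins and, iterating \autoref{lem:twin removal}, inherits from $G$ the absence of universal vertices and of clique cutsets, so $G'$ is triangle-free; and $G'$ is even-hole-free as an induced subgraph of $G$. For a later $H$, which is the previous graph with the previous skeleton (and a few extra clique-component vertices) deleted, its skeleton is isomorphic to an induced subgraph of the previous skeleton, since deleting vertices only lets equivalence classes vanish or merge; hence it is again triangle-free and even-hole-free. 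Consequently, in every round the removed graph---the skeleton, together with the doubled clique components, which show up as extra $K_2$-components---is triangle-free and even-hole-free, so by \autoref{coro:omegaplus1} it has chromatic number at most $3$, and \autoref{coro:coldiamond} actually produces a $3$-colouring.

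Next I would check that each round lowers $\omega(H)$ by exactly $2$ unless it empties $H$. In the blow-up picture, deleting one vertex from every class drops the clique number by $2$, except when $\omega(H)$ is attained by a class that is itself a connected component of $H$, where a single deletion drops it by only $1$; the ``keep two vertices'' rule is calibrated precisely to restore the drop of $2$ in that case, and when that rule does not apply---that is, when every maximum clique is a single vertex, so $\omega(H)=1$---the skeleton is all of $H$ and one colour finishes the process. Hence the number of rounds is $\lceil\omega(G)/2\rceil$: if $\omega(G)$ is even there are $\omega(G)/2$ rounds of at most $3$ colours each, giving at most $\frac{3}{2}\omega(G)=\lfloor\frac{3}{2}\omega(G)\rfloor$ colours; if $\omega(G)$ is odd there are $(\omega(G)-1)/2$ rounds of at most $3$ colours followed by one round using a single colour, giving at most $\frac{3(\omega(G)-1)}{2}+1=\lfloor\frac{3}{2}\omega(G)\rfloor$ colours. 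As this count is already an integer, the bound follows.

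For the running time there are $\lceil\omega(G)/2\rceil=O(n)$ rounds, each costing $O(n+m)$ for the connected components, the equivalence classes, and the skeleton; and, since the skeletons are pairwise vertex-disjoint, all of their $\beta$-perfect colourings together cost $O(n^2)$ by \autoref{coro:coldiamond}; so the total is $O(mn)$. The points I expect to need the most care are the two structural claims underlying a single round---that the skeleton stays triangle-free (for which the cleanest route is the blow-up invariant above, using that deleting vertices can only merge equivalence classes) and that the single-component glitch is fully repaired by the ``keep two vertices'' rule---together with the parity bookkeeping in the colour count. It is worth noting that no universal-vertex or clique-cutset subroutine is needed inside this procedure: peeling skeletons alone forces $\omega$ down, so the structure never has to be searched for beyond computing connected components.
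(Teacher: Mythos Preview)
Your proposal is correct and follows essentially the same approach as the paper: iteratively peel off a triangle-free skeleton (one vertex per twin-class, with the clique-component fix) and colour each layer with at most three colours via \autoref{coro:coldiamond}, so that $\lceil\omega(G)/2\rceil$ rounds suffice. The only cosmetic difference is that you recompute the twin-classes each round and justify triangle-freeness via the ``induced subgraph of the previous skeleton'' observation, whereas the paper simply keeps the original classes $T_1,\ldots,T_r$ fixed throughout, so that every peeled layer is automatically an induced subgraph of the original triangle-free $G'$.
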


We now present the algorithm for general (even-hole,cap)-free graphs.

\begin{algorithm}[H]
\SetAlgoLined
\KwIn{A (even-hole,cap)-free graph $G$.}
\KwOut{A $\frac{3}{2}\omega(G)$-coloring of $G$.}
Do clique cutset decomposition of $G$ \bfcite{Ta85} and obtain a binary decomposition tree $T(G)$.

\For{each atom $A$}{
	$A':=A$\;
	\For{each $a\in V(A)$}{
		\If(\tcp*[h]{$a$ is a universal vertex in $A$}){$|N_{A}(a)|=|A|-1$}
			{$A':=A-a$\;}
	}
	Partition $A'$ into equivalence classes $T_1,\ldots,T_r$ under $\sim_T$\; 

Obtain a $\frac{3}{2}\omega(A')$-coloring $\phi_{A'}$ of $A'$ by \autoref{lem:colatom}\;
Extend $\phi_{A'}$ to a coloring $\phi_A$ of $A$ by coloring each vertex in $A\setminus A'$ with a  new color\;
}

Combine coloring $\phi_A$ of the atoms along $T(G)$ and obtain a coloring $\phi$ of $G$.
\caption{A $3/2$-approximation algorithm for \textsc{Chromatic Number}}
\label{alg:1.5-approx}
\end{algorithm}

%

We show that the algorithm is correct.
\begin{thm}
\autoref{alg:1.5-approx} is correct and runs in $O(mn^2)$ time.
\end{thm}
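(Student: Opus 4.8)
The plan is to verify two things: correctness of the coloring produced by \autoref{alg:1.5-approx}, and the $O(mn^2)$ running time. For correctness, I would first recall that clique cutset decomposition (Tarjan \bfcite{Ta85}) produces a binary tree $T(G)$ whose leaves are the atoms of $G$, and that a coloring of $G$ can be assembled from colorings of the atoms by processing $T(G)$ bottom-up: whenever two pieces share a clique cutset $K$, one permutes the colors on one side so that the two colorings agree on $K$ (possible since $K$ is a clique and hence receives $|K|$ distinct colors on each side), and this uses no new colors. Hence $\chi(\phi)\le\max_A|\phi_A|$ over atoms $A$, and moreover every clique of $G$ lies inside a single atom, so $\omega(G)=\max_A\omega(A)$. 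So it suffices to bound $|\phi_A|$ for each atom $A$.

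Next I would bound $|\phi_A|$. Fix an atom $A$. In the inner loop the algorithm strips off universal vertices one at a time to obtain $A'$; by \autoref{obse:ucs} each removal preserves the property of having no clique cutset, and clearly removing a universal vertex also cannot create a new universal vertex that was not already "almost universal" — more precisely, the loop simply removes all vertices that are universal at the time they are examined, so $A'$ has no universal vertices and no clique cutsets, with $\omega(A')=\omega(A)-k$ where $k=|A\setminus A'|$ is the number of universal vertices removed. By \autoref{lem:colatom}, $A'$ gets a $\frac{3}{2}\omega(A')$-coloring, and then each of the $k$ stripped vertices gets its own fresh color, so
\[
|\phi_A|\le \tfrac{3}{2}\omega(A')+k=\tfrac{3}{2}(\omega(A)-k)+k=\tfrac{3}{2}\omega(A)-\tfrac12 k\le \tfrac{3}{2}\omega(A)\le \tfrac{3}{2}\omega(G).
\]
Combined with the previous paragraph this shows $\phi$ uses at most $\frac{3}{2}\omega(G)$ colors, as required. (One should double-check the subtle point that the universal vertices of $A$ are exactly the vertices removed by the loop regardless of removal order — since a vertex universal in $A$ stays universal in every $A-(\text{some universal vertices})$, and a vertex not universal in $A$ is never universal in a subgraph of $A$ that still contains its non-neighbor; the only way a non-neighbor could have disappeared is if it was itself universal, contradiction. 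This justifies $\omega(A')=\omega(A)-k$.)

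For the running time: the clique cutset decomposition of Tarjan runs in $O(mn)$ and produces a tree with $O(n)$ atoms whose total size is $O(n^2)$ (or $O(mn)$); checking and stripping universal vertices in an atom $A$ costs $O(|V(A)|+|E(A)|)$; partitioning into $\sim_T$-classes is a twin-detection step costing $O(|V(A)|\cdot(|V(A)|+|E(A)|))$ or better; and by \autoref{lem:colatom} coloring each $A'$ costs $O(m_{A'}n_{A'})$. Summing over all atoms, the dominant term is $O(mn)$ per atom times $O(n)$ atoms, i.e. $O(mn^2)$; assembling the colorings along $T(G)$ costs $O(mn)$ more. Hence the total is $O(mn^2)$.

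The main obstacle I anticipate is the bookkeeping in the combine step — verifying that colors on the two sides of a clique cutset can always be reconciled without introducing new colors, and that this propagates correctly through the whole binary tree — together with pinning down the claim that $\omega(A')=\omega(A)-k$ (i.e. that the inner loop removes precisely the universal vertices of $A$, no more and no fewer), which is exactly what makes the $-\frac12 k$ slack in the color count legitimate rather than a liability.
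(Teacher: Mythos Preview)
Your proposal is correct and follows essentially the same approach as the paper's proof: verify that each atom $A'$ (after stripping universal vertices) has no universal vertex and no clique cutset, invoke \autoref{lem:colatom}, and then account for the stripped universal vertices and the clique-cutset recombination. You are in fact more explicit than the paper on two points the paper dispatches in a single sentence (``universal vertices and clique cutsets preserve the $\chi$-binding function''): you spell out the color-permutation argument for merging along clique cutsets, and you compute $|\phi_A|\le\tfrac32\omega(A')+k=\tfrac32\omega(A)-\tfrac12 k$ directly. One minor remark: the inner loop as written tests universality in the \emph{original} $A$ (the condition is $|N_A(a)|=|A|-1$), not in the current $A'$, so your parenthetical showing that the two notions coincide is exactly what is needed and matches the paper's contradiction argument that $A'$ retains no universal vertex.
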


\begin{proof}
We first discuss the running time.
The clique cutset decomposition can be found in $O(mn)$
time and there are at most $n$ atoms, see \bfcite{Ta85}.
The \textbf{for} loop from line $4$  to line $8$ and line $11$ apparently take $O(n)$ time.
To partition $A'$ into $T_1,\ldots,T_r$, we test for each edge $e=xy\in E(A')$
whether or not $N[x]=N[y]$. For each edge it takes $O(n)$ time and
so line $9$ takes $O(mn)$ time. Line $10$ takes $O(mn)$ time by \autoref{lem:colatom}.
In a word, the coloring $\phi_{A}$, for each atom $A$,
can be found inn $O(mn)$ time. Since there are $O(n)$ atoms, the total running time is $O(mn^2)$.

To prove the correctness, we first note that  $A'$ (at the end of line $8$) contains no universal vertex.
Suppose not, let $b\in A'$ be a universal vertex in $A'$.
Since all vertices $A\setminus A'$ are universal vertices in $A$, they are all adjacent to $b$.
This implies that $b$ is a universal vertex in $A$ and so it would have been removed
during the \textbf{for} loop from line $4$  to line $8$, a contradiction.
Furthermore, $A'$ contain no clique cutsets by \autoref{obse:ucs}.
Therefore, the correctness follows from \autoref{lem:colatom} and the fact that
universal vertices and clique cutsets preserve the $\chi$-binding function.
\end{proof}

\bibliographystyle{plain} \bibliography{references}

\begin{thebibliography}{10}

\bibitem{achrs}
L.~Addario-Berry, M.~Chudnovsky, F.~Havet, B.~Reed, and P.~Seymour.
\newblock Bisimplicial vertices in even-hole-free graphs.
\newblock {\em J.Combin. Theory Ser. B}, 98:1119--1164, 2008.

\bibitem{CCH15}
K.~Cameron, S.~Chaplick, and C.T. Ho\`{a}ng.
\newblock On the structure of (pan, even hole)-free graphs.
\newblock arXiv:1508.03062v1, 2015.

\bibitem{CL15}
H.C. Chang and H.I. Lu.
\newblock A faster algorithm to recognize even-hole-free graphs.
\newblock {\em J.Combin. Theory Ser. B}, 113:141--161, 2015.

\bibitem{CKS07}
S.A. Choudum, T.~Karthick, and M.A. Shalu.
\newblock Perfectly coloring and linearly $\chi$-bound ${P}_6$-free graphs.
\newblock {\em Journal of Graph Theory}, 54:293--306, 2007.

\bibitem{cks}
M.~Chudnovsky, K.~Kawarabayashi, and P.~Seymour.
\newblock Detecting even holes.
\newblock {\em Journal of Graph Theory}, 48:85--111, 2005.

\bibitem{CCKV99}
M.~Conforti, G.~Cornu\'ejols, A.~Kapoor, and K.~Vu\v{s}kovi\'c.
\newblock Even and odd holes in cap-free graphs.
\newblock {\em Journal of Graph Theory}, 30:289--308, 1999.

\bibitem{cckv-ehf1}
M.~Conforti, G.~Cornu\'ejols, A.~Kapoor, and K.~Vu\v{s}kovi\'c.
\newblock Even-hole-free graphs, part {I}: Decomposition theorem.
\newblock {\em Journal of Graph Theory}, 39:6--49, 2002.

\bibitem{cckv-ehf2}
M.~Conforti, G.~Cornu\'ejols, A.~Kapoor, and K.~Vu\v{s}kovi\'c.
\newblock Even-hole-free graphs, part {II}: Recognition algorithm.
\newblock {\em Journal of Graph Theory}, 40:238--266, 2002.

\bibitem{CGP15}
M.~Conforti, B.~Gerards, and K.~Pashkovich.
\newblock Stable sets and graphs with no even holes.
\newblock {\em Math. Program., Ser. B}, 153:13--39, 2015.

\bibitem{SV07}
M.V.G. da~Silva and K.~Vu\v{s}kovi\'c.
\newblock Triangulated neighborhoods in even-hole-free graphs.
\newblock {\em Discrete Mathematics}, 307:1065--1073, 2007.

\bibitem{SV13}
M.V.G. da~Silva and K.~Vu\v{s}kovi\'c.
\newblock Decomposition of even-hole-free graphs with star cutsets and 2-joins.
\newblock {\em J.Combin. Theory Ser. B}, 103:144--183, 2013.

\bibitem{Fa89}
M.~Farber.
\newblock On diameters and radii of bridged graphs.
\newblock {\em Discrete Mathematics}, 73:249--260, 1989.

\bibitem{Golu04}
M.C. Golumbic.
\newblock {\em Algorithmic Graph Theory and Perfect Graphs}, volume~57 of {\em
  Annals of Discrete Mathematics}.
\newblock North-Holland Publishing Co., 2004.

\bibitem{GLS84}
A.~Gr{\"o}tschel, M.~Lov\'{a}sz, and L.~Schrijver.
\newblock Polynomial algorithms for perfect graphs.
\newblock {\em Ann. Discrete Math.}, 21:325--356, 1984.

\bibitem{Gy87}
A.~Gy{\'a}rf{\'a}s.
\newblock Problems from the world surrounding perfect graphs.
\newblock {\em Zastos. Mat.}, 19:413--431, 1987.

\bibitem{kmv}
T.~Kloks, H.~M{\"u}ller, and K.~Vu{\v s}kovi{\'c}.
\newblock Even-hole-graphs that do not contain diamonds: a structure theorem
  and its consequences.
\newblock {\em Journal of Combinatorial Theory B}, 99:733--800, 2009.

\bibitem{mgr}
S.E. Markossian, G.S. Gasparian, and B.~Reed.
\newblock $\beta$-perfect graphs.
\newblock {\em Journal of Combinatorial Theory B}, 67:1--11, 1996.

\bibitem{Ta85}
R.E. Tarjan.
\newblock Decomposition by clique separators.
\newblock {\em Discrete Math.}, 55:221--232, 1985.

\end{thebibliography}

\end{document}